\newtheorem{proposition}{Proposition}
\begin{document}

\title{\LARGE Quantum Optical Integrated Sensing and Communication with Homodyne BPSK Detection}

\author{Ioannis Krikidis,~\IEEEmembership{Fellow, IEEE}
\thanks{I. Krikidis is with the Department of Electrical and Computer Engineering, University of Cyprus, Cyprus (e-mail: krikidis@ucy.ac.cy).}
\thanks{This work was supported by the ERC (Horizon Europe, Grant No. 101241675, PoC QUARTO) and the EU Horizon JU-SNS (Grant No. 101139291, iSEE-6G).}}

\maketitle

\begin{abstract}
In this letter, we propose a quantum integrated sensing and communication scheme for a quantum optical link using binary phase-shift keying modulation and homodyne detection. The link operates over a phase-insensitive Gaussian channel with an unknown deterministic phase rotation, where the homodyne receiver jointly carries out symbol detection and phase estimation. We formulate a design problem that minimizes the bit-error rate subject to a Fisher information-based constraint on estimation accuracy. To solve it, we develop an iterative algorithm composed of an inner expectation-maximization loop for joint detection and estimation and an outer loop that adaptively retunes the local oscillator phase. Numerical results confirm the effectiveness of the proposed approach and demonstrate a fundamental trade-off between communication reliability and sensing accuracy.
\end{abstract}

\vspace{-0.2cm}
\begin{keywords}
Quantum optical communications, homodyne receiver, BPSK, ISAC, EM algorithm, Fisher information, BER. 
\end{keywords}

\vspace{-0.3cm}
\section{Introduction}
\IEEEPARstart{I}{ntegrated} sensing and communication (ISAC) is envisioned as a key enabler of sixth-generation wireless systems, where communication signals are co-designed not only to transmit information but also to sense and interact with the surrounding environment \cite{ZHA}. In recent years, ISAC has been extensively studied from multiple perspectives, including information-theoretic analysis, physical-layer techniques, networking aspects, and practical implementations \cite{MAS}. A key challenge in ISAC is the fundamental trade-off between communication and sensing performance, since a single system architecture cannot simultaneously achieve optimal efficiency for both objectives.

On the other hand, quantum communication is a promising paradigm that aims to achieve non-conventional benefits by exploiting fundamental principles of quantum mechanics, such as entanglement, linear superposition, and quantum measurement \cite{DJO}. In particular, quantum optical communication models light as a quantum harmonic oscillator and embeds classical information into quantum states. Recently, this communication technology has attracted significant attention, primarily due to its applications in quantum key distribution, but also from a broader communication-theoretic perspective. For instance, the work in \cite{JUN} investigates coding techniques for coherent-state-based quantum optical communications over multipath channels, while the work in \cite{KRI} designs a batteryless quantum optical link and studies its fundamental performance limits. In addition, quantum optical signals can enhance the sensitivity of parameter estimation in sensing applications by exploiting quantum-mechanical features, a mature research field with major applications in quantum radar systems \cite{LAZ}. Extending the ISAC concept to quantum optical communication systems gives rise to quantum ISAC (QISAC), a new paradigm with promising applications \cite{CON}. Compared to classical ISAC systems that balance rate, power, and sensing accuracy under classical signal and noise assumptions, the QISAC framework operates under quantum noise and measurement uncertainty, exposing analogous but fundamentally constrained trade-offs imposed by quantum physics.

In this letter, we present for the first time in the literature a practical QISAC model for an optical quantum link, where a homodyne receiver jointly detects binary phase-shift keying (BPSK)-modulated signals and estimates (senses) a deterministic channel phase rotation. The QISAC design problem is formulated as an optimization task that minimizes the bit-error rate (BER) of BPSK detection subject to a prescribed estimation quality, expressed through the block Fisher information with respect to the local oscillator (LO) phase of the homodyne receiver. To solve this problem, we develop an iterative algorithm consisting of (i) an inner expectation-maximization (EM) loop that jointly detects the transmitted symbols and estimates the unknown channel phase, and (ii) an outer loop that updates the LO phase by gradually steering it toward communication- or sensing-optimal angles based on a Fisher-information feasibility constraint. Numerical results validate the efficiency of the proposed scheme and reveal the fundamental trade-off between communication reliability and sensing accuracy. Our work does not aim to demonstrate a quantum advantage over classical ISAC systems, but to establish a physically consistent quantum-optical formulation of the ISAC problem based on coherent states and homodyne detection, thereby bridging classical estimation methods with quantum-optical modeling.

\noindent {\it Notation:} $|x\rangle$ (called a ket) denotes a vector in a complex Hilbert space, $\langle x|Y|x\rangle$ represents the expectation value of operator $Y$ when the system is in state $|x\rangle$, $[A,B]=AB-BA$ denotes the commutator for two operators $A$ and $B$; $\mathcal{N}(x;\mu,\sigma^2)$ denotes the Gaussian probability density function (pdf) with mean $\mu$ and variance $\sigma^2$, $Q(\cdot)$ denotes the Q-function, and $\Tr(\cdot)$ represents the trace operator.
 
\vspace{-0.3cm}
\section{System Model}\label{sysec}

We assume a simple quantum optical link where the transmitter encodes classical information into BPSK coherent states. Specifically, for each channel use the transmitter prepares a coherent state \cite{JUN}
\vspace{-0.1cm}
\begin{equation}
	|\alpha_m\rangle = |\sqrt{E}\, e^{i\varphi_m}\rangle,\;\;\textrm{with}\;\;
	\varphi_m =\pi m,\;\; m=0,1,
\end{equation}
where $E$ denotes the average photon number per symbol (the symbol energy) and each symbol is equiprobable. 

We model the optical link as a phase-insensitive Gaussian channel with transmissivity $\eta \in (0,1]$, thermal noise of mean photon number $N_a$, and an unknown quasi-static phase rotation $\theta$ induced by medium imperfections such as path fluctuations and laser phase noise; $\theta$ may also capture environmental interactions, {\it e.g.}, target reflections. This channel, characterized by fixed transmissivity and constant phase rotation, is a fundamental model for coherent optical links and provides a realistic description of line-of-sight quantum communication systems \cite{Holevo}.

For a BPSK coherent input $|\alpha_m\rangle$, the channel output is a displaced thermal state \cite{DJO}, which can be written as a Gaussian mixture over all coherent states (density operator in P-representation) {\it i.e.,}
\vspace{-0.1cm}
\begin{equation}
\rho_{\mathrm{out}}(m)
= \frac{1}{\pi N_a}\!\int
\exp\!\left(-\frac{|\alpha-\beta_m|^2}{N_a}\right)\,
|\alpha\rangle\langle\alpha|\,\mathrm{d}^2\alpha,
\end{equation}
where $\beta_m=\sqrt{\eta}\,\alpha_m\,e^{i\theta}$ is the mean field after attenuation and phase rotation, and $|\alpha\rangle$ denotes a coherent state with eigenvalue $\alpha$.

At the receiver, we employ a \emph{homodyne detector} that measures the field quadrature aligned with a LO of phase $\psi$. The quadrature operator is defined as \cite[Ch. 5.8]{DJO}
\vspace{-0.1cm}
\begin{equation}
    X_\psi = \frac{1}{\sqrt{2}}\left(a\,e^{-i\psi}+a^\dagger e^{i\psi}\right),
\end{equation}
where $a$ ($a^\dagger$) is the annihilation (creation) operator of the received mode with commutator $[a,a^\dagger]=1$ and $a|\alpha\rangle=\alpha|\alpha\rangle$. 

For the displaced thermal state $\rho_{\mathrm{out}}(m)$, the homodyne measurement outcome is a real random variable with Gaussian distribution $\mathcal{N}\!\big(x;\mu_m, \sigma^2\big)$
with mean and variance given, respectively, by 
\vspace{-0.1cm}
\begin{align}
\mu_m&=\langle X_\psi \rangle = \mathrm{Tr}\big(\rho_{\mathrm{out}}(m)\,X_{\psi}\big)= A\cos(\varphi_m+\phi), \\
\sigma^2&=\langle X_\psi^2\rangle - \langle X_\psi\rangle^2= N_a + \tfrac{1}{2},
\end{align}
where $A=\sqrt{2\eta E}$, and $\phi \triangleq \theta - \psi$ denotes the \emph{effective phase offset}, {\it i.e.}, the mismatch between the unknown channel phase $\theta$ and the controllable LO phase $\psi$.

\begin{figure}
	\centering
	\includegraphics[width=0.67\linewidth]{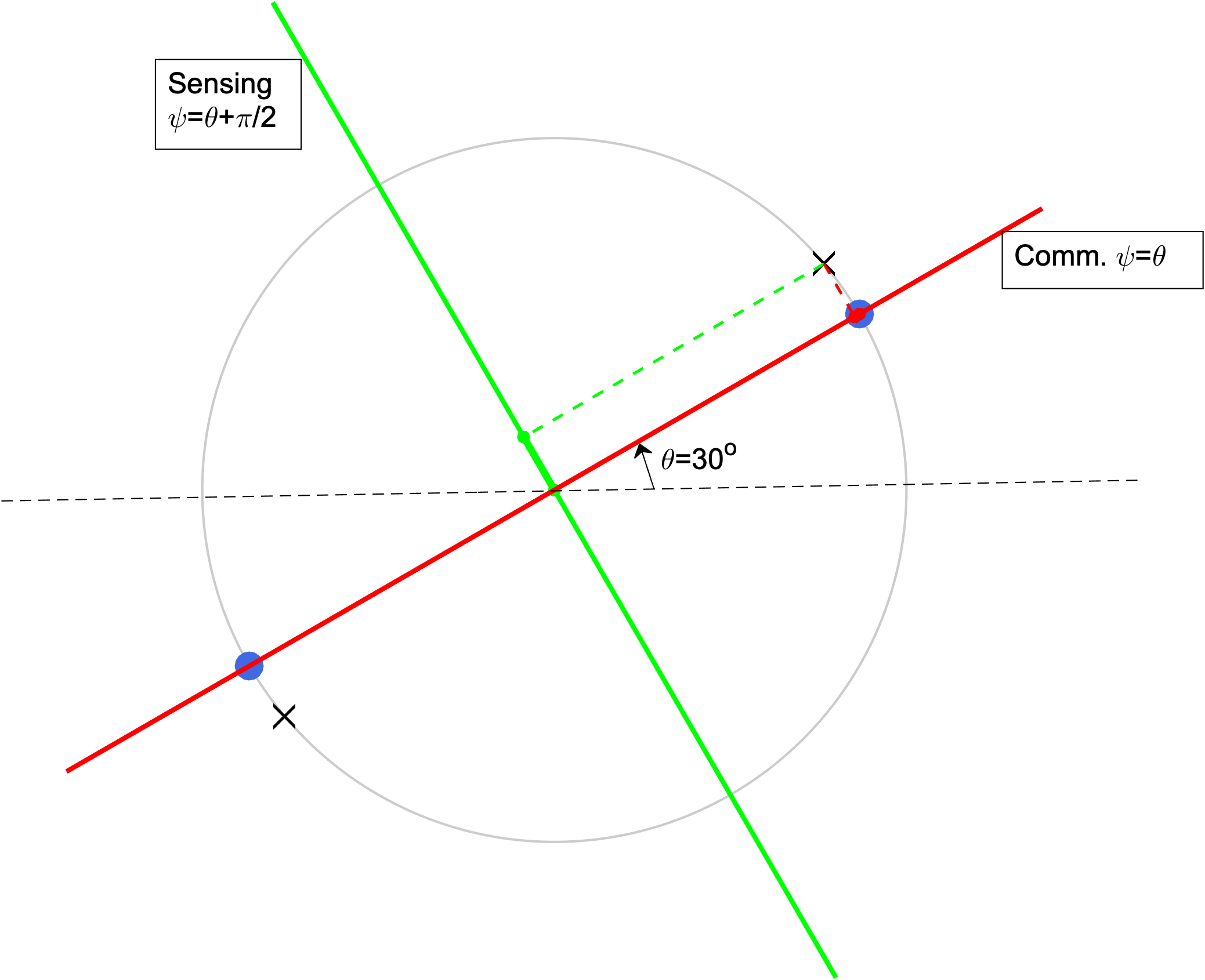}
	\vspace{-0.2cm}
	\caption{Communication-sensing trade-off for BPSK with channel 
		rotation $\theta=30^\circ$ and perturbation $\delta\theta=10^\circ$. 
		The red axis ($\psi=\theta$) maximizes symbol separation and is 
		communication-optimal, while the green axis ($\psi=\theta+\pi/2$) 
		maximizes the projection shift due to $\delta\theta$ and is sensing-optimal. 
		Dashed lines indicate the projections of the perturbed symbols onto each axis.}
	\label{system}
\end{figure}

\vspace{-0.3cm}
\section{Quantum integrated sensing and communication}

We consider a simple QISAC scenario where the quantum optical transmitter sends a block of $N$ BPSK symbols (see Sec.~\ref{sysec}) and the receiver jointly performs communication and sensing tasks. Specifically, the receiver adjusts the LO phase $\psi$ in order to jointly (i) demodulate the transmitted symbols with low probability of error, and (ii) estimate the unknown channel phase rotation $\theta$ with sufficient accuracy.  

Formally, the QISAC design problem is formulated as the minimization of the BER, subject to a constraint on the estimation accuracy of the unknown parameter $\theta$, quantified by the Fisher information (or, equivalently, by the Cram\'er--Rao bound\footnote{By the Cram\'er--Rao bound, any unbiased estimator of $\theta$ satisfies 
$\mathrm{Var}(\hat{\theta}) \ge 1/F_c(\psi,\theta)$. Thus, enforcing 
$F_c(\psi,\theta) \ge \Gamma_{\min}$ guarantees that the estimator variance 
is at most $1/\Gamma_{\min}$; larger $\Gamma_{\min}$ means lower variance (better estimation performance).}). Specifically, the optimization problem considered is given by
\vspace{-0.1cm}
\begin{align}
[P1]\;\;&\min_{\psi \in [0,\pi] }\;  P_e(\psi,\theta) \nonumber \\
&\text{s.t.}\quad  F_c(\psi,\theta)\;\geq\;\Gamma_{\min},
\end{align}
where $P_e(\psi,\theta)$ denotes the BER for BPSK as a function of the LO phase $\psi$ and the channel rotation $\theta$, $F_c(\psi,\theta)=NF(\psi,\theta)$ denotes the block Fisher information associated with $\theta$, $F(\psi,\theta)$ is the symbol Fisher information, and $\Gamma_{\min}$ represents the required minimum estimation quality with $\Gamma_{\min}\leq F_c^{\max}\leq NA^2/\sigma^2$ (see Appendix \ref{ap2}).

Since the true channel parameter $\theta$ is unknown at the receiver, it is 
estimated from a block of $N$ homodyne measurements ({\it e.g.}, using the proposed EM algorithm). 
The resulting estimate $\hat{\theta}$ is then substituted into the expressions for $P_e(\cdot)$ 
and $F_c(\cdot)$ when solving problem [P1] with respect to $\psi$. 
Before introducing the proposed QISAC algorithm, we first derive closed-form analytical 
expressions for the BER $P_e(\cdot)$ and the Fisher information $F_c(\cdot)$, 
which are summarized in the following proposition.
\vspace{-0.1cm}
\begin{proposition}
For BPSK transmitted coherent states and a homodyne receiver, 
the BER and the symbol Fisher information for the 
deterministic parameter $\theta$ are given by
\begin{align}
&P_e(\psi,\theta)=Q\left(\frac{A}{\sigma}\!\left|\cos(\phi)\right|\right), \label{error}\\
&F(\psi, \theta)\!=\!\!\frac{1}{2}\!\!\int_{-\infty}^{\infty}\!\!\!
\frac{\Big(\!\!\sum_{m=0}^{1}\mathcal{N}\!\big(x;\mu_m,\sigma^2\big)\,
\frac{(x-\mu_m)\,\mu'_m}{\sigma^2}\!\Big)^2}
{\sum_{m=0}^{1}\mathcal{N}\!\big(x;\mu_m,\sigma^2\big)}\,\!dx, \label{fish}
\vspace{-0.8cm}
\end{align}
where $\mu'_m=-A\sin(\varphi_m+\phi)$. We also note that $F_c^{\max}=N\max_{\phi}F(\psi,\theta)$ which can be computed numerically (1-D search). 
\end{proposition}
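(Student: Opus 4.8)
The plan is to establish the two formulas independently, since each follows from a short computation once the symbol-conditional homodyne law $\mathcal{N}(x;\mu_m,\sigma^2)$ derived in Section~\ref{sysec} is taken as the starting point.

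For the BER, I would first specialize the means to BPSK: with $\varphi_0=0$ and $\varphi_1=\pi$ one gets $\mu_0=A\cos\phi$ and $\mu_1=A\cos(\pi+\phi)=-A\cos\phi$, so the two hypotheses correspond to Gaussians of common variance $\sigma^2$ located symmetrically at $\pm A\cos\phi$. Since the symbols are equiprobable and the noise is symmetric, the minimum-error (equivalently MAP or ML) decision rule is the sign detector with threshold at the origin, obtained by equating the two likelihoods. The conditional error probability under either hypothesis is then the Gaussian tail mass beyond the origin, $Q\!\big(|\mu_0|/\sigma\big)=Q\!\big(\tfrac{A}{\sigma}|\cos\phi|\big)$, which gives \eqref{error}; I would treat the degenerate case $\cos\phi=0$ separately, where the conditionals coincide, detection is a coin flip, and $Q(0)=\tfrac12$ keeps the formula valid.

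For the Fisher information, I would write the likelihood of a single homodyne outcome under the equiprobable prior as the mixture $p(x;\theta)=\tfrac12\sum_{m=0}^{1}\mathcal{N}(x;\mu_m,\sigma^2)$, noting that the estimand $\theta$ enters only through $\mu_m=A\cos(\varphi_m+\phi)$ with $\phi=\theta-\psi$ and $\psi$ held fixed. Differentiating a Gaussian location term gives $\partial_\theta\mathcal{N}(x;\mu_m,\sigma^2)=\mathcal{N}(x;\mu_m,\sigma^2)\,\tfrac{x-\mu_m}{\sigma^2}\,\mu'_m$, with $\mu'_m=\partial_\theta\mu_m=-A\sin(\varphi_m+\phi)$ by the chain rule (since $\partial\phi/\partial\theta=1$). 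Substituting into the score-squared expectation $F(\psi,\theta)=\int (\partial_\theta p)^2/p\,\mathrm{d}x$ and cancelling the factor $\tfrac14$ from the squared numerator against the $\tfrac12$ in the denominator reproduces exactly \eqref{fish}; interchanging $\partial_\theta$ and $\int$ is justified by the standard regularity of Gaussian families. Additivity of Fisher information over the $N$ independent and identically distributed channel uses gives $F_c=NF$, and since $F$ depends on $(\psi,\theta)$ only through $\phi$, the maximization over the LO phase collapses to the one-dimensional search $F_c^{\max}=N\max_\phi F$, with the bound $F_c^{\max}\le NA^2/\sigma^2$ inherited from Appendix~\ref{ap2} (the mixture Fisher information is dominated by the known-symbol value $(\mu'_m)^2/\sigma^2=A^2\sin^2(\varphi_m+\phi)/\sigma^2\le A^2/\sigma^2$).

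I do not expect a genuine obstacle here; the two places that warrant a line of care are arguing that the origin is truly the BER-optimal threshold (rather than merely asserting it) and tracking the $\tfrac12$ prior weights through the quadratic-over-linear Fisher-information integrand so that the prefactor in \eqref{fish} comes out exactly.
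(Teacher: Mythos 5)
Your proposal is correct and follows essentially the same route as the paper: the BER part is the standard equiprobable binary-Gaussian argument (the paper phrases it via $d_{\min}(\phi)=2A|\cos\phi|$ and $P_e=Q(d_{\min}/2\sigma)$, which is the same threshold-at-the-midpoint computation you spell out), and the Fisher-information part is the identical score-of-a-Gaussian-mixture calculation with the same $\tfrac14/\tfrac12=\tfrac12$ prefactor bookkeeping. Your added remarks on the degenerate case $\cos\phi=0$ and on the optimality of the origin threshold are fine but not needed beyond what the paper does.
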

\begin{proof}
The proof is given in Appendix \ref{ap1}. 
\end{proof}
In addition, a careful observation of the BER in~\eqref{error} and the Fisher information 
in~\eqref{fish} reveals that the LO phase $\psi$ induces a fundamental trade-off between 
communication and sensing. Specifically, aligning the LO phase $\psi$ toward the communication-optimal direction improves demodulation performance but reduces phase sensitivity, 
whereas aligning it toward the sensing-optimal direction enhances estimation accuracy at the cost of a higher BER. In the following proposition, we present the optimal LO phases $\psi$ that facilitate communication and sensing tasks, highlighting their inherent conflict. The expressions are derived under a simplified high- signal-to-noise ratio (SNR) assumption and form the basis for the outer loop of the proposed QISAC algorithm. Fig. \ref{system} schematically also presents this critical observation. 
\vspace{-0.1cm}
\begin{proposition}\label{p2}
\label{prop:angles}
For BPSK transmitted coherent states and a homodyne receiver in the high-SNR 
regime, (i) the LO phase that minimizes the BER is $\psi_{\mathrm{com}} = \theta+k\pi$, and (ii) the LO phase that maximizes the Fisher information is $\psi_{\mathrm{sen}} = \theta + \frac{\pi}{2}(2k+1)$, where $k\in \mathbb{Z}$. Therefore, the communication- and sensing-optimal LO phases differ by exactly 
$\pi/2$.
\end{proposition}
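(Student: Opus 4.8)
The plan is to handle the two claims separately; each reduces to an elementary one-variable optimization once the right scalar objective is isolated, so the only real modelling input is the high-SNR simplification used for the second claim.

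For part (i) I would work directly with the closed-form BER~\eqref{error}. Since the Q-function is strictly decreasing, minimizing $P_e(\psi,\theta)=Q\!\big(\tfrac{A}{\sigma}|\cos\phi|\big)$ over $\psi$ is equivalent to maximizing $|\cos(\phi)|$ with $\phi=\theta-\psi$; the maximal value $1$ is attained exactly when $\phi\equiv 0\pmod{\pi}$, i.e.\ $\psi=\theta+k\pi$ with $k\in\mathbb{Z}$. Geometrically this says the measured quadrature axis is aligned with the line through the two signal points $\pm A$, which maximizes their separation $|\mu_0-\mu_1|=2A|\cos\phi|$ (the red axis of Fig.~\ref{system}); no approximation is needed.

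For part (ii) I would first use the BPSK symmetry $\mu_1=-\mu_0$ and $\mu_1'=-\mu_0'$, with $\mu_0=A\cos\phi$ and $\mu_0'=-A\sin\phi$, to rewrite the integrand of~\eqref{fish} and factor the symbol Fisher information as $F(\psi,\theta)=\tfrac12(\mu_0')^2\,g(\mu_0)$, where $g$ is a purely geometric integral depending on $\phi$ only through the mean $\mu_0$ (equivalently through $\mu_0/\sigma$). In the high-SNR regime, whenever $|\cos\phi|$ is bounded away from zero the two component Gaussians $\mathcal{N}(x;\pm\mu_0,\sigma^2)$ are well separated: around each mean one component dominates the mixture, the cross terms in the integrand are exponentially small, and the Fisher integral decouples into the sum of the two single-Gaussian location-parameter Fisher informations, each equal to $(\mu_m')^2/\sigma^2$ and weighted by its prior $\tfrac12$. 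Thus $g(\mu_0)\to 2/\sigma^2$ and $F(\psi,\theta)\approx\tfrac{A^2}{\sigma^2}\sin^2\phi$, which is maximized at $\phi\equiv\tfrac{\pi}{2}\pmod{\pi}$, i.e.\ $\psi=\theta+\tfrac{\pi}{2}(2k+1)$ — the quadrature orthogonal to the signal line, along which a perturbation $\delta\theta$ produces the maximal displacement $A|\sin\phi|\,\delta\theta$ of the projected means (the green axis of Fig.~\ref{system}). Part (iii) is then immediate: subtracting the two families, $\psi_{\mathrm{sen}}-\psi_{\mathrm{com}}=\tfrac{\pi}{2}+k\pi$, i.e.\ exactly $\pi/2$ modulo $\pi$.

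The delicate step is the non-uniformity of the high-SNR decoupling. Exactly at $\phi=\pi/2$ the two means coincide, $\mu_0=\mu_1=0$, and a direct computation gives $\partial_\theta p(x)\equiv 0$, hence $F=0$; so the approximation $F\approx\tfrac{A^2}{\sigma^2}\sin^2\phi$ fails inside an $O(\sigma/A)$ neighbourhood of $\pi/2$ that shrinks as the SNR grows. The honest statement, which I would make explicit, is that the approximation holds for $|\cos\phi|$ bounded away from $0$, is strictly increasing in $\sin^2\phi$ there, and therefore the Fisher-maximizing phase approaches $\theta+\tfrac{\pi}{2}(2k+1)$ as the SNR increases — which is precisely the regime in which this proposition is asserted and in which the outer loop of the proposed algorithm operates. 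Quantifying the width of this neighbourhood (e.g.\ via a tail bound on the Gaussian-overlap integral appearing in $g$) is the only genuine estimate; the remainder is bookkeeping.
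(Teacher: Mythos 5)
Your proof follows essentially the same route as the paper's: part (i) is the monotonicity of the Q-function applied to $|\cos\phi|$, and part (ii) is the high-SNR decoupling of the Gaussian mixture into two well-separated lobes, giving $F\approx \tfrac{A^2}{\sigma^2}\sin^2\phi$ and hence the quadrature-optimal sensing angle. Both arguments are correct and match Appendix~B.

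The one place where you go beyond the paper is worth flagging: your observation that the \emph{exact} Fisher information vanishes at $\phi=\pi/2$ is correct and is not addressed in the paper's proof. Indeed, at $\phi=\pi/2$ the BPSK symmetry gives $\mu_0=\mu_1=0$ and $\mu_0'=-\mu_1'$, so the score $\partial_\theta p(x;\psi,\theta)$ in \eqref{eq:deriv1} is identically zero and $F=0$ --- the two lobes are not ``well separated'' there but coincident, so the decoupling argument fails precisely at the nominal optimum. Your resolution (the approximation is valid and increasing in $\sin^2\phi$ for $|\cos\phi|$ bounded away from zero by a margin of order $\sigma/A$, so the true maximizer approaches $\theta+\tfrac{\pi}{2}(2k+1)$ only in the limit $A/\sigma\to\infty$) is the honest reading of the proposition, and it is consistent with the paper's later remark that the high-SNR sensing angle is adopted because it ``remains nearly optimal'' and is only approached gradually by the $\lambda$-update. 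In short: same proof, but your treatment of the non-uniformity at $\phi=\pi/2$ is a genuine refinement that the paper's argument silently skips.
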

\begin{proof}
	The proof is given in Appendix \ref{ap2}. 
\end{proof}

The proposed QISAC algorithm follows a two-loop structure {\it i.e.,} in the inner loop, 
given a current value of the LO phase~$\psi$, we apply the EM algorithm 
to jointly estimate the unknown channel phase~$\theta$ and detect the 
transmitted symbols (latent variables). In the outer loop, the estimate of~$\theta$ is used 
to re-tune~$\psi$ by exploiting the remarks in Proposition \ref{p2}; the EM step is then repeated with the updated~$\psi$, and the process iterates until convergence\footnote{In scenarios where the phase $\theta$ varies within a block (due to laser phase noise, drift, or turbulence) the proposed framework could be extended with recursive or Kalman-based phase-tracking mechanisms, enabling real-time adaptation and improved robustness to rapid phase fluctuations.}. In the following discussion, we present the two loops of the proposed algorithm. 

\subsubsection{Inner loop (estimate $\theta$ and demodulate BPSK symbols for a given $\psi$)}
We formulate QISAC as a maximum-likelihood (ML) estimation problem in which the transmitted 
BPSK symbols are latent variables. Let $s_n \in \{0,1\}$ denote the 
latent variable corresponding to the transmitted symbol index in the 
$n$-th channel use, and let $\{x_n\}_{n=1}^N$ denote the homodyne 
observations. Consequently, an EM scheme 
can be employed, alternating between computing the symbol posteriors 
(E-step) and maximizing the expected complete-data log-likelihood over 
$\theta$ (M-step), thereby yielding joint estimates of $\theta$ and the 
symbol sequence. Specifically, the EM algorithm starts with an initial 
guess $\theta^{(0)}$ ({\it e.g.}, random initialization) and then iterates 
between the following two steps until convergence.

\textbf{E-step (symbol posteriors):} For each observation $x_n$ and each symbol index 
$m \in \{0,1\}$, we compute the posterior probability
\vspace{-0.3cm}
\begin{align}
\gamma_{n,m}^{(t)} &= \Pr(s_n=m \mid x_n,\theta^{(t)}) \nonumber \\
&= \frac{\mathcal{N}\!\big(x_n;\mu_m(\psi,\theta^{(t)}),\sigma^2\big)}
{\sum_{k=0}^{1}\mathcal{N}\!\big(x_n;\mu_k(\psi,\theta^{(t)}),\sigma^2\big)}.
\end{align}
These responsibilities represent the probability that observation $x_n$ 
was generated by symbol index $m$.

\textbf{M-step (parameter update):} We update $\theta$ by maximizing the 
expected complete-data log-likelihood
\vspace{-0.3cm}
\begin{align}
\mathcal{Q}(\theta \mid \theta^{(t)}) 
&= \mathbb{E}_{s_n\mid x_n,\,\theta^{(t)}} 
\Big[ \log p(x_n,s_n\mid \psi,\theta) \Big] \nonumber \\[1ex]
&= \sum_{n=1}^{N}\sum_{m=0}^{1}\gamma_{n,m}^{(t)}\,
\log \mathcal{N}\!\big(x_n;\mu_m(\psi,\theta),\sigma^2\big).
\end{align}
Equivalently, this step can be written as the weighted quadratic problem 
\vspace{-0.1cm}
\begin{align}
\theta^{(t+1)}
&=\arg\max_{\theta}\;\mathcal{Q}(\theta \mid \theta^{(t)})  \nonumber \\
&=\arg\min_{\theta}\;\sum_{n=1}^{N}\sum_{m=0}^{1}\gamma_{n,m}^{(t)}\,
\big(x_n-\mu_m(\psi,\theta)\big)^2. \label{mstep}
\end{align}
The M-step in \eqref{mstep} does not admit a general closed-form solution for $\theta$.
We can solve it numerically via Newton's iterative method by using closed-form expressions for the first/second derivatives, as follows  
\vspace{-0.2cm}
\begin{align}
&J(\theta) = \sum_{n=1}^N \sum_{m=0}^{1} 
\gamma_{n,m}^{(t)} \,\big(x_n - \mu_m(\psi,\theta)\big)^2, \\
&\theta^{(t+1)}=\theta^{(t)}-\frac{g(\theta^{(t)})}{h(\theta^{(t)})},
\end{align}
where the first derivative $g(\theta)$ and the second derivative $h(\theta)$ are given by
\vspace{-0.1cm}
\begin{align}
&g(\theta)
=\partial_\theta J(\theta) \nonumber \\
&=2A \sum_{n=1}^N \sum_{m=0}^{1} \gamma_{n,m}^{(t)} 
\Big[ x_n \sin(\theta+c_m) - \tfrac{A}{2}\sin\!\big(2(\theta+c_m)\big)\Big], \label{eq:grad}\\[1ex]
&h(\theta)
=\partial_{\theta}^2 J(\theta) \nonumber \\
&=2A \sum_{n=1}^N \sum_{m=0}^{1} \gamma_{n,m}^{(t)} 
\Big[ x_n \cos(\theta+c_m) - A\cos\!\big(2(\theta+c_m)\big)\Big], \label{eq:hess}
\end{align}
where $c_m \triangleq \varphi_m - \psi$.  Starting from an initial guess $\theta^{(0)}$, 
Newton's method converges rapidly to the maximizer of $\mathcal{Q}(\theta|\theta^{(t)})$. The output of a full EM step is an updated estimate of the unknown rotation parameter, denoted by $\hat{\theta}$, together with a ML (hard) detection of the transmitted symbol (index) sequence, given by
\vspace{-0.25cm}
\begin{equation}
	\hat{s}_n = \arg\max_{m\in\{0,1\}} \gamma_{n,m}^{(t)}.
\end{equation}
Each EM iterative step consists of at most $L_{\max}$ iterations, or it terminates earlier if the difference between two consecutive estimates falls below a tolerance $\varepsilon$. In addition, Newton algorithm runs until a tolerance criterion is satisfied or until a maximum number of iterations $N_{\max}$ is reached.

\subsubsection{Outer loop (tuning of the LO phase $\psi$)}
After estimating the channel phase $\hat{\theta}$ from the current block via the EM algorithm (inner loop),  the receiver updates the LO phase $\psi$ in an iterative manner. 
The guiding principle (see Proposition \ref{p2}) is that sensing performance is maximized when the effective offset 
$\phi=\hat{\theta}-\psi$ is close to quadrature ($\phi\simeq (2k+1)\pi/2$), whereas communication performance is favored when $\phi$ is shifted towards ($\phi\simeq k\pi$). 
Based on this observation, we set a sensing-target $\psi_{\text{sen}}=\hat{\theta}+\frac{\pi}{2}(2k+1)$, and a communication-target $\psi_{\text{com}}=\hat{\theta}+k\pi$. 

At each outer iteration, the Fisher-information constraint in [P1] is evaluated using the current 
$(\psi,\hat{\theta})$ by using \eqref{fish}. If the constraint is violated, the update moves $\psi$ toward $\psi_{\text{sen}}$; 
otherwise, it moves toward $\psi_{\text{com}}$. To avoid large jumps and control fluctuation, the update is performed along a granular angular displacement within the fundamental sector $[0,\pi]$. Specifically, the iterative rule is given by
\vspace{-0.1cm}
\begin{equation}
\psi_{t+1} = \psi_t + \lambda \,\mathrm{wrap}_\pi\big(\psi_{\text{tar}}-\psi_t\big)\;\;\;(\!\!\!\!\!\!\mod \pi),
\end{equation}
where $\psi_{\text{tar}}\in\{\psi_{\text{sen}},\psi_{\text{com}}\}$ depending on feasibility (eq. \eqref{fish}), 
$\lambda\in (0,1]$ is an auxiliary step parameter, and $\mathrm{wrap}_\pi(x)=x-\pi\operatorname{round}(x/\pi)$ adjusts $\psi_{\text{tar}}$ to avoid large jumps and numerical fluctuations in the LO phase update. Once the new $\psi$ is set, the algorithm returns to the inner EM loop. The basic steps of the QISAC algorithm are summarized in Algorithm~\ref{alg1}. It is worth noting that, by directly maximizing the Fisher information in \eqref{fish}, the optimal sensing offset can be obtained for any SNR. To avoid SNR-dependent offsets, we adopt the high-SNR-optimal sensing angle, used for analytical insight, which admits a closed-form expression and remains nearly optimal in the moderate-to-high SNR regime, since the gradual $\lambda$-based update only steers the LO toward a sensing-friendly direction rather than relying on an exact SNR-specific angle.

\noindent {\it Computational complexity:} The proposed QISAC scheme uses an iterative inner--outer structure. The inner loop runs the EM algorithm with complexity $\mathcal{O}(N I_{\text{EM}})$ per block, where $I_{\text{EM}} \leq L_{\max}$. The Newton update and LO retuning in the outer loop incur negligible cost. In contrast, a brute-force joint search over all symbol sequences and LO phases grows exponentially with $N$ and is computationally prohibitive, making the proposed approach efficient and scalable compared with exhaustive methods.

\begin{algorithm}[t]\label{alg1}
	\small
	\caption{QISAC scheme for homodyne BPSK}
	\KwIn{$\{x_n\}_{n=1}^N$, $\psi^{(0)}$, $\sigma^2$, $A$, $\Gamma_{\min}$, $\lambda$, $\varepsilon$, $T_{\max}$, $L_{\max}$, $N_{\max}$.}
	\KwOut{$\hat\theta$, $\hat\psi$, $\{\hat s_n\}$}
	$t\!\leftarrow\!0$; $\psi\!\leftarrow\!\psi^{(0)}$ \;
	\While{$t<T_{\max}$}{
		\tcp{EM for fixed $\psi$}
		initialize $\theta$; \Repeat{$|\Delta\hat\theta|<\varepsilon$ \rm{or max. number of iterations $L_{\max}$}}{
			\textbf{E:} $\gamma_{n,m}\!\propto\!\exp\!\big(-\tfrac{(x_n-A\cos(\theta+\varphi_m-\psi))^2}{2\sigma^2}\big)$, $m\!\in\!\{0,1\}$; normalize over $m$\;
			\textbf{M:} Newton method for $\theta$ using weighted quadratic cost; $\hat\theta\!\leftarrow\!$ update\;
		}
		$\theta\!\leftarrow\!\hat\theta$; and $\hat s_n\!=\!\arg\max_m\gamma_{n,m}$\;
		\tcp{Choose target $\psi$ by feasibility}
		$F_c\!\leftarrow\!NF(\psi,\hat\theta)$ by using \eqref{fish}\;
		$\psi_{\text{com}}\!\leftarrow\!\hat\theta+k\pi$;\quad
		$\psi_{\text{sen}}\!\leftarrow\!\hat\theta+\tfrac{\pi}{2}(2k+1)$\;
		$\psi_{\text{tar}}\!\leftarrow\!
		\begin{cases}
			\psi_{\text{com}}, & F_c\ge\Gamma_{\min}\\
			\psi_{\text{sen}}, & \text{otherwise}
		\end{cases}$\;
		$\Delta\psi\!\leftarrow\!\mathrm{wrap}_{\pi}(\psi_{\text{tar}}-\psi)$;\quad
		$\psi\!\leftarrow\!\psi+\lambda\,\Delta\psi\;\;(\!\!\!\mod \pi)$\;
		\If{$|\Delta\psi|<\varepsilon$}{\textbf{break}}
		$t\!\leftarrow\!t+1$\;
	}
	\Return{$\hat\theta\!=\!\theta$, $\hat\psi\!=\!\psi$, $\{\hat s_n\}$}
\end{algorithm}

\begin{figure}
\centering
	\includegraphics[width=0.87\linewidth]{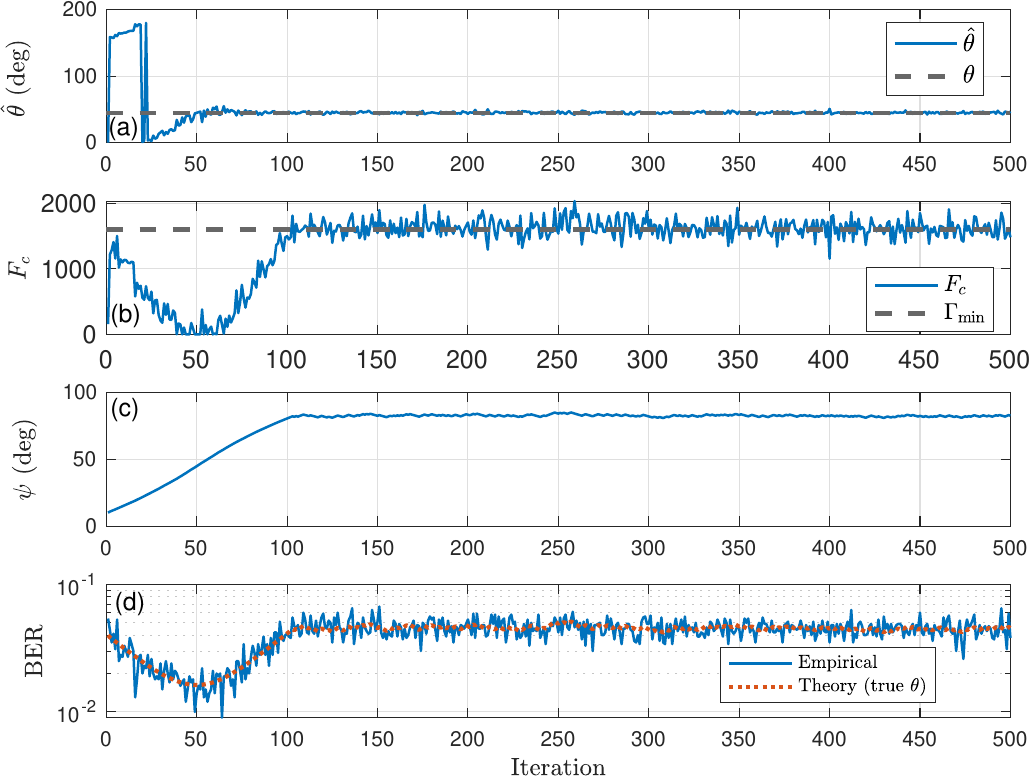}
    \vspace{-0.3cm}
	\caption{QISAC performance versus the number of iterations. 
		(a) Estimated $\hat{\theta}$ (dashed: true $\theta=45^\circ$); 
		(b) Fisher information (dashed: $\Gamma_{\min}=0.6F_c^{\max}$); 
		(c) LO phase $\psi$; 
		(d) BER: empirical (dashed: theory with $\phi=\theta-\hat{\psi}$).
		Setting: $N=1000$, and $\theta=45^\circ$.}
	\label{fig1}
\end{figure}

\begin{figure}
\centering
	\includegraphics[width=0.87\linewidth]{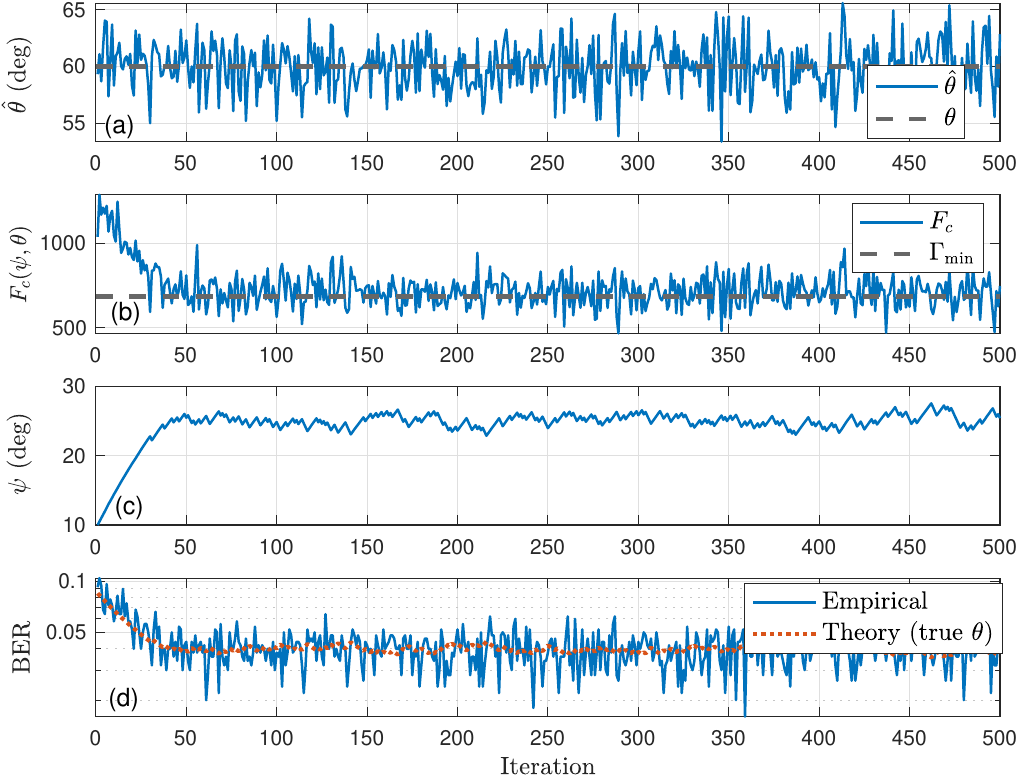}
	 \vspace{-0.3cm}
		\caption{QISAC performance versus the number of iterations. 
		(a) Estimated $\hat{\theta}$ (dashed: true $\theta=60^\circ$); 
		(b) Fisher information (dashed: $\Gamma_{\min}=0.
		5F_c^{\max}$); 
		(c) LO phase $\psi$; 
		(d) BER: empirical (dashed: theory with $\phi=\theta-\hat{\psi}$).
		Setting: $N=500$, and $\theta=60^\circ$.}
	\label{fig2}
\end{figure}

\vspace{-0.36cm}
\section{Numerical results}

We validate the proposed QISAC scheme via computer simulations. Unless stated otherwise, we adopt the common setting with $E=10$, $\eta=0.8$, $\lambda=0.01$, $N_a=3$, $\varepsilon=10^{-3}$ (tolerance for $\psi$ (outer), EM and Newton (inner) updates), $T_{\max}=L_{\max}=500$ iterations, and $N_{\max}=100$ iterations.

Fig. \ref{fig1}(a--d) illustrates the rapid convergence of the proposed QISAC algorithm for a representative configuration with $\theta=45^\circ$, $N=1000$ symbols, $\Gamma_{\min}=0.6F_c^{\max}$. Our key observation: (a) The estimated phase $\hat{\theta}$ locks to the true value within a few outer iterations and thereafter exhibits only small residual jitter (approximately $\pm1^\circ$), indicating stable behavior of the EM inner loop under the chosen operating point. (b) The block Fisher information $F_c$ quickly rises to the prescribed threshold $\Gamma_{\min}$ and remains close to it, with fluctuations attributable to finite-sample noise. (c) Correspondingly, the LO phase $\psi$ converges to a fixed operating value $\approx 85^\circ$ after roughly $100$ outer iterations, reflecting the algorithm's retuning toward an appropriate sensing-communication compromise. (d) The empirical BER also settles to a stable level and closely matches the theoretical benchmark computed for the instantaneous estimate $\phi=\theta-\hat{\psi}$, confirming the validity of the approximate expression  in our operating regime.

Fig.~\ref{fig2}(a--d) illustrates the same algorithm under a smaller block
size, with $\theta=60^\circ$, $N=500$, and $\Gamma_{\min}=0.5F_{c}^{\max}$.
As expected, the reduced $N$ increases finite-sample variability; in
particular, both $\hat{\theta}$ and $F_c$ exhibit visibly larger
fluctuations, and the steady-state BER shows higher variance than in
Fig.~\ref{fig1}. The looser Fisher-information constraint biases the LO
toward a more communication-favorable operating point (smaller
$|\phi|$). Nevertheless, because of the smaller $N$, the steady-state
BER remains comparable to that of Fig.~\ref{fig1}.

Finally, Fig. \ref{fig3} presents the steady-state communication-sensing trade-off (BER versus normalized block Fisher $\Gamma_{\min}/F_{c}^{\max}$). The dashed curves show the theoretical performance for known $\theta$, while markers report Monte-Carlo results for the QISAC scheme. A clear Pareto trade-off appears {\it i.e.}, increasing the Fisher requirement drives $|\phi|$ toward $\pi/2$, improving estimation accuracy at the cost of a larger BER; relaxing the Fisher requirement moves $\phi$ toward $0$, minimizing BER. Consistent with the benchmark scaling $F_c^{\max}=N A^2/\sigma^2$, larger $N$ or smaller noise variance ``enlarge'' the achievable region, since the sensing requirement can be met with a more communication-friendly phase, yielding lower BER. 

\begin{figure}
\centering
	\includegraphics[width=0.75\linewidth]{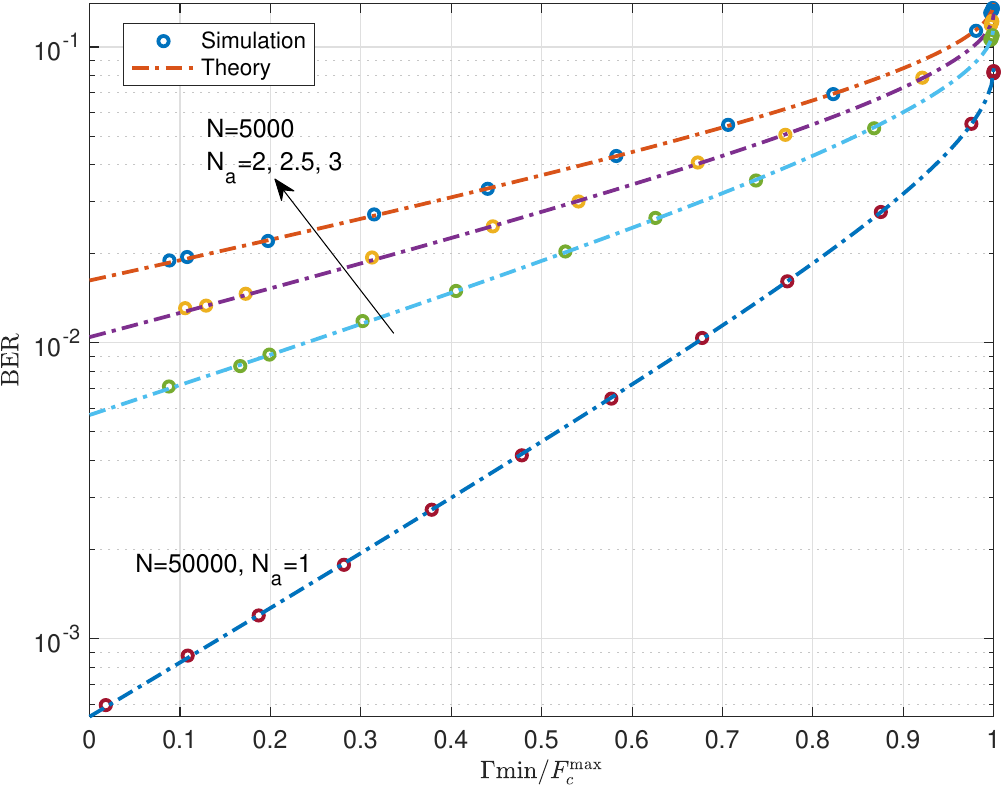}
	\vspace{-0.3cm}
\caption{Tradeoff between communication performance (BER) and normalized estimation/sensing accuracy ($\Gamma_{\min}/F_c^{\max}$); simulation results (markers), theoretical results corresponding to known $\theta$ (dashed lines) . Setting: $\theta = 30^\circ$, $N_a \in \{1, 2,\,2.5,\,3\}$, $N \in \{5000,\,50000\}$.}
	\label{fig3}
\end{figure}

\vspace{-0.2cm}
\section{Conclusion}

In this letter, we have studied a practical QISAC framework for quantum optical links with BPSK modulation and homodyne detection. By formulating the design as a BER minimization problem under a Fisher-information-based constraint, we developed a low-complexity iterative algorithm that jointly estimates the unknown channel rotation and detects the transmitted symbols. Numerical results validated the effectiveness of the proposed approach and demonstrated the fundamental trade-off between communication performance and sensing quality. Future extensions will investigate non-BPSK modulation schemes, which may offer improved efficiency and broader applicability, as well as non-classical light sources such as squeezed or entangled states to surpass the standard quantum limit within the proposed QISAC framework.

\appendices

\vspace{-0.3cm}
\section{Proof of proposition 1}\label{ap1}

For BPSK under homodyne detection with LO phase $\psi$ and effective offset
$\phi$, the Euclidean distance between the symbols is written as 
\vspace{-0.1cm}
\begin{align}
	d_{\min}(\phi)
	&=\left|A\cos(\varphi_0+\phi)
	- A\cos(\varphi_{1}+\phi)\right| \nonumber\\
	&=2A\left|\sin\!\big(\phi-\tfrac{\pi}{2}\big)\right|=2A|\cos(\phi)|. \label{ll}
	\end{align}
With Gaussian noise of variance $\sigma^2$, the BER is equal to 
\vspace{-0.1cm}
\begin{align}
	&P_e(\psi,\theta)=Q\!\left(\frac{d_{\min}(\phi)}{2\sigma}\right)=Q\!\left(
	\frac{A|\cos(\phi)|}{\sigma}.
	\right).
\end{align}
For the Fisher information, the likelihood is a Gaussian mixture
\vspace{-0.2cm}
\begin{align}
	p(x;\psi, \theta)=\frac{1}{2}\sum_{m=0}^{1}
	\mathcal{N}\!\big(x;\mu_m,\sigma^2\big).
\end{align}
The derivative of the log-likelihood with respect to $\theta$ is
\vspace{-0.1cm}
\begin{align}
&\partial_\theta \log p(x;\psi,\theta)\!=\! \frac{\sum_{m=0}^{1} 
        \mathcal{N}\!\big(x;\mu_m,\sigma^2\big)\,
        (x-\mu_m)\,\mu'_m}
        {\sigma^2 \sum_{m=0}^{1} 
        \mathcal{N}\!\big(x;\mu_m,\sigma^2\big)}.
\label{eq:deriv1} 
\end{align}
Therefore, the symbol Fisher information is
\vspace{-0.1cm}
\begin{align}
&F(\psi,\theta)
=\int_{-\infty}^{\infty} \big(\partial_\theta \log p(x;\psi,\theta)\big)^2\,p(x;\psi,\theta)\,dx \nonumber\\
&=\frac{1}{2}\int_{-\infty}^{\infty}\!\!\!
\frac{\Big(\sum_{m=0}^{1}\mathcal{N}\!\big(x;\mu_m,\sigma^2\big)\,
\frac{(x-\mu_m)\,\mu'_m}{\sigma^2}\Big)^2}
{\sum_{m=0}^{1}\mathcal{N}\!\big(x;\mu_m,\sigma^2\big)}\,dx.
\end{align}

\vspace{-0.2cm}
\section{Proof of proposition 2}\label{ap2}

For the communication task, the error probability in \eqref{error} decreases as the argument of the $Q$-function increases, which is proportional to $|\cos(\phi)|$. The maximum of the cosine term occurs at $\phi=k\pi$, which yields the communication-optimal phase
$\psi_{\mathrm{com}}=\theta+k\pi$, with $k\in \mathbb{Z}$.

For the estimation/sensing task, we consider the Fisher information in~\eqref{fish}, defined as the
expectation of the squared derivative of the log-likelihood with respect to $\theta$.
The received distribution is a Gaussian mixture with means
$\mu_m=A\cos(\varphi_m+\phi)$ and derivatives
$\mu'_m=-A\sin(\varphi_m+\phi)$. At finite SNR, the derivative contains
cross terms between mixture components, but in the high-SNR regime
($A/\sigma\to\infty$) these terms vanish because the Gaussian lobes are well
separated and each observation effectively originates from a single symbol.
In this regime the symbol Fisher reduces to the average of the per-symbol Fisher values
\vspace{-0.1cm}
\begin{align}
F(\psi,\theta) &\approx \frac{1}{2}\sum_{m=0}^{1}\frac{(\mu'_m)^2}{\sigma^2}
= \frac{A^2}{2\sigma^2}\left(\sin^2(\phi)+\sin^2(\pi+\phi) \right) \nonumber\\
&=\frac{A^2}{\sigma^2}\sin^2(\phi).
\end{align}
Based on the above, the Fisher information is maximized for $\sin^2(\phi)=1\Rightarrow\psi_{\mathrm{sen}}=\theta+\frac{\pi}{2}(2k+1)$,  with $k\in \mathbb{Z}$.

\vspace{-0.2cm}

\end{document}